\newcolumntype{?}{!{\vrule width 1.2pt}}
\newtheorem{lmm}{Lemma}
\newtheorem{cor}{Corollary}
\DeclareMathOperator{\conv}{\operatorname{conv}}
\begin{document}
\setlength{\textheight}{8.0truein}

\runninghead{The signaling dimension  of two-dimensional and
  polytopic  systems}{Shuriku Kai and Michele Dall'Arno}

\normalsize\textlineskip
\thispagestyle{empty}


\vspace*{0.88truein}

\alphfootnote


\centerline{\bf THE  SIGNALING DIMENSION  OF TWO-DIMENSIONAL
  AND POLYTOPIC SYSTEMS}

\vspace*{0.37truein}

\centerline{\footnotesize SHURIKU KAI}
\vspace*{0.015truein}
\centerline{\footnotesize\it Department  of Computer Science
  and  Engineering,  Toyohashi   University  of  Technology,
  Japan}

\vspace*{10pt}

\centerline{\footnotesize MICHELE DALL'ARNO}
\vspace*{0.015truein}
\centerline{\footnotesize\it Department  of Computer Science
  and  Engineering,  Toyohashi   University  of  Technology,
  Japan}
\baselineskip=10pt
\centerline{\footnotesize\it michele.dallarno.mv@tut.jp}

\vspace*{0.225truein}


\vspace*{0.21truein}

\abstracts{The  signaling dimension  of  any given  physical
  system represents its classical  simulation cost, that is,
  the  minimum dimension  of a  classical system  capable of
  reproducing all the input/output correlations of the given
  system.   The  signaling  dimension  landscape  is  vastly
  unexplored; the  only non-trivial systems  whose signaling
  dimension is  known -- other  than quantum systems  -- are
  the    octahedron    and    the   composition    of    two
  squares.}{Building  on  previous   results  by  Matsumoto,
  Kimura, and Frenkel, our first result consists of deriving
  bounds  on the  signaling  dimension of  any  system as  a
  function of  its Minkowski  measure of asymmetry.   We use
  such bounds to  prove that the signaling  dimension of any
  two-dimensional system (i.e.   with two-dimensional set of
  admissible states, such as polygons and the real qubit) is
  two if and only if such  a set is centrally symmetric, and
  three otherwise, thus conclusively settling the problem of
  the signaling dimension for such systems.}  {Guided by the
  relevance of  symmetries in  the two dimensional  case, we
  propose a branch and bound division-free algorithm for the
  exact computation of the symmetries of any given polytope,
  in  polynomial  time in  the  number  of vertices  and  in
  factorial time in the dimension  of the space.  Our second
  result  then consist  of  providing an  algorithm for  the
  exact computation of the  signaling dimension of any given
  system, that outperforms  previous proposals by exploiting
  the   aforementioned  bounds   to   improve  its   pruning
  techniques   and  incorporating   as   a  subroutine   the
  aforementioned symmetries-finding algorithm.  We apply our
  algorithm  to compute  the  exact value  of the  signaling
  dimension  for  all  rational Platonic,  Archimedean,  and
  Catalan  solids, and  for  the  class of  hyper-octahedral
  systems up to dimension five.}

\vspace*{10pt}

\keywords{signaling  dimension,   generalized  probabilistic
  theory,  GPT, square  bit,  squit, extremal  measurements,
  symmetries, branch and bound algorithm}

\vspace*{1pt}\textlineskip

\section{Introduction}

Generalized  probabilistic theories~\cite{CDP11}  (GPTs), of
which quantum theory is an example, provide the most general
model to  describe how  correlations between  observed input
and output events  should be computed.  In  this sense, they
all represent generalizations of  classical theory. But what
is the classical cost of simulating  any given such a GPT by
means of classical  theory?  The answer to  this question is
formally given  by the  \textit{signaling dimension}  (for a
recent  overview,  see   Ref.~\cite{Dal22}),  that  is,  the
dimension  of   the  smallest  classical  system   that  can
reproduce all  the input/output correlations that  the given
system is capable of.

As  such, the  signaling  dimension of  any given  classical
system   is,  by   definition,   equal   to  the   dimension
(e.g. maximum number of perfectly distinguishable states) of
such a system. However, even  for systems as familiar as the
quantum ones,  quantifying the signaling dimension  has been
elusive  for decades.   During all  this time,  it has  been
arguably  part of  the quantum  folklore that  the signaling
dimension must be equal to the Hilbert space dimension (once
again,  the  maximum  number  of  perfectly  distinguishable
states).    Finally,   in    2015,   in   a   groundbreaking
work~\cite{FW15}  Frenkel  and Weiner  conclusively  settled
this debate by proving the  correctness of this belief; they
did  so   with  an   elegant  proof  that   leverages  graph
theoretical results.

Frenkel  and Weiner's  work fueled  further research  on the
topic.   Following their  work, it  was shown~\cite{DBTBV17}
that there exists GPTs  whose systems, while consistent with
classical  and quantum  theory  at the  level of  space-like
correlations,  exhibit   anomalies  --  quantifiable   as  a
superadditivity  of  the  signaling dimension  under  system
composition  known as  \textit{hypersignaling}  -- in  their
time-like  correlations.   In  other  words,  the  signaling
dimension plays for space-like correlations a role analog to
that  played by  the no-signaling  principle for  space-like
correlations  (hence the  name).   That work  also lied  the
foundations  for the  characterization  of  the polytope  of
input/output classical  correlations and the  computation of
the signaling dimension of any given system.

Following   this    line   of   research,    Matsumoto   and
Kimura~\cite{MK18}  proved  a  very  interesting  connection
between the signaling dimension and the Minkowski measure of
asymmetry.   Building   on  that  and  on   the  results  of
Ref.~\cite{DBTBV17},  and  applying again  graph-theoretical
results  as  in  his previous  result,  Frenkel~\cite{Fre22}
recently  provided bounds  on the  signaling dimension  as a
function of the  dimension of the linear span  of the states
of  the system  only.  In  the  same work,  the author  also
computed the  signaling dimension  of a simple  system whose
set of admissible states forms an octahedron.

At  the  same   time,  Doolittle  and  Chitambar~\cite{DC21}
extended   the  results   of   Ref.~\cite{DBTBV17}  to   the
characterization    of    the    polytope    of    classical
correlations. While they cleverly  exploit the symmetries of
the polytope to simplify  the characterization of its facets
(the  vertices are  trivial  to characterize,  but what  one
typically  needs  in  computations are  the  facets),  their
method   still  requires   the  complete   and  simultaneous
enumeration  of   all  the  vertices,  whose   number  grows
exponentially  in the  number of  states of  the system  and
factorially in the dimension of  the space.  That makes such
an approach impractical in most cases.

Recently, it was shown~\cite{DTB24}  that such an exhaustive
enumeration is actually unnecessary  if one is interested in
computing  the  signaling   dimension.   By  exploiting  the
symmetries~\cite{GMCD10} (if  any) of the set  of admissible
states and the structure of the correlation matrices induced
on  such a  set  by  the extremal  measurements~\cite{Par99,
  DLP05, Dav78} of the system,  an algorithm was devised and
implemented  that is  capable of  computing \textit{exactly}
the  signaling  dimension  of  systems  as  complex  as  the
(hypersignaling)  composition  of   two  square  systems  --
typically      introduced~\cite{DT10}      to      reproduce
Popescu-Rohrlich correlations~\cite{PR94} --  in a matter of
minutes.   This   result,  along  with   the  aforementioned
bounds~\cite{Fre22} derived by Frenkel, represents the state
of the art of the signaling dimension problem.

Here,  we progress  in two  directions.  First,  building on
Ref.~\cite{Fre22}, we provide upper  and lower bounds on the
signaling dimension of any given  system. We show that, when
specified to two-dimensional systems (any system whose state
space is two  dimensional, such as polygonal  systems or the
real  qubit), our  results characterize  in closed-form  the
signaling  dimension.    Specifically,  we  show   that  the
signaling dimension  is two  for any such  a system  that is
centrally  symmetric;  and  it  is  three  otherwise.   This
settles   the  issue   of   the   signaling  dimension   for
two-dimensional systems.

The second direction  we progress in is the  case of systems
whose set of admissible states is a polytope.  We provide an
exact,  branch  and   bound,  division-free  algorithm  that
exactly characterizes  the symmetries of any  given polytope
in  polynomial  time  in  the  number  of  vertices  and  in
factorial time  in the dimension. Our  pruning rule provides
an heuristic  speedup (while  preserving the  correctness of
the result)  over previous proposals~\cite{DBK23}.   We show
how to exploit such a  result, along with the aforementioned
bounds, to reduce the complexity of the exact computation of
the  signaling  dimension, and  we  apply  our algorithm  to
compute the exact expression  of the signaling dimension for
certain   rational  Archimedean   and  Catalan   solids  and
hyper-octahedra,  thus generalizing  the result  obtained by
Frenkel~\cite{Fre22} for the (three-dimensional) octahedron.

The paper is structured as follows. In Section~\ref{sect:2d}
we formalize the signaling dimension, we introduce bounds as
a function of the asymmetry,  and we conclusively settle the
problem  of   the  signaling  dimension  for   systems  with
two-dimensional  set of  admissible states.   Guided by  the
relevance of the symmetries for the two-dimensional case, in
Section~\ref{sect:symmetries}  we provide  an exact,  branch
and  bound, division-free  symmetries-finding algorithm  for
polytopes in arbitrary dimension. In Section~\ref{sect:algo}
we address the problem  of computing the signaling dimension
for systems whose set of admissible states is a polytope, by
providing an  algorithm that, exploiting  the aforementioned
bounds and  symmetries-finding subroutine,  exactly computes
the  signaling dimension  of any  given polytopic  system in
finite  time.  In  Section~\ref{sect:applications} we  apply
our  algorithm to  the  exact computation  of the  signaling
dimension for  systems whose set  of admissible states  is a
rational    regular   or    quasi-regular   solid    or   an
hyper-octahedron. We summarize our  results and discuss some
open problems in Section~\ref{sect:conclusion}.

\section{Two dimensional systems}
\label{sect:2d}

In this section,  building upon previous results~\cite{MK18,
  Fre22} by Matsumoto, Kimura, and Frenkel, we provide upper
and lower  bounds on  the signaling  dimension of  any given
system as a function of its central symmetry.

For  any  given system,  whose  (convex)  set of  admissible
states we denote with  $\mathcal{S} \in \mathbb{R}^\ell$, we
denote with $\operatorname{lin.dim}(  \mathcal{S} ) := \ell$
and $\operatorname{aff.dim}( \mathcal{S} )  := \ell - 1$ the
linear dimension and the affine dimension, respectively.

Let us  denote with $\mathcal{P}_\mathcal{S}^{m \to  n}$ the
polytope of $m$-input/$n$-output  correlations attainable by
system $\mathcal{S}$, that is
\begin{align*}
  \mathcal{P}_{\mathcal{S}}^{m \to n} :=  \left\{ p \; \Big|
  \;  \exists  \left\{ \omega_i  \right\}_{i=1}^m  \subseteq
  \mathcal{S},   \left\{   e_j  \right\}_{j=1}^n   \subseteq
  \mathcal{E} \textrm{  s.t. } p_{i,j} =  \omega_i \cdot e_j
  \right\},
\end{align*}
where $\mathcal{E}$  denotes the set of  effects (typically,
but  not   necessarily,  the  set  of   linear  non-negative
functionals    over   $\mathcal{S}$).    We   denote    with
$\operatorname{sig.dim}(\mathcal{S})$      the     signaling
dimension of $\mathcal{S}$, given by
\begin{align}
  \label{eq:sigdim}
  \operatorname{sig.dim}   \left(  \mathcal{S}   \right)  :=
  \inf_{\substack{d                                      \in
      \mathbb{N}\\\mathcal{P}_{\mathcal{S}}^{m     \to    n}
      \subseteq \mathcal{P}_d^{m  \to n}}} d,
\end{align}
for any $m$  and $n$.  It was  proven in Ref.~\cite{DBTBV17}
that  the   signaling  dimension   (for  any   system  whose
$\mathcal{S}$  is  not  trivially  a point)  is  bounded  as
follows:
\begin{align}
  \label{eq:bounds}
  2 \le \operatorname{sig.dim}  \left( \mathcal{S} \right)
  & \le \operatorname{lin.dim} \left( \mathcal{S} \right).
\end{align}

Finally, we denote  with $\operatorname{asymm} ( \mathcal{S}
)$ the Minkowski measure of asymmetry~\cite{Gru63}, that is,
the smallest  dilation factor  needed to cover  the mirrored
set  $- \mathcal{S}  :=  \{  - \omega  \;  |  \; \omega  \in
\mathcal{S} \}$ up to a translation, that is
\begin{align}
  \label{eq:asymm}
  \operatorname{asymm}   \left(   \mathcal{S}   \right)   :=
  \inf_{\substack{\lambda >  0\\ c  \in \mathbb{R}^{\ell}\\-
      \left(  \mathcal{S}  -  c  \right)  \subseteq  \lambda
      \left( \mathcal{S} - c \right)}} \lambda.
\end{align}
It immediately follows from this definition that
\begin{align}
  \label{eq:bounds2}
  1 \le \operatorname{asymm}  \left( \mathcal{S} \right) \le
  \operatorname{aff.dim} \left( \mathcal{S} \right),
\end{align}
where   the   first   inequality    is   tight   (that   is,
$\operatorname{asymm}(\mathcal{S})  =  1$)  if and  only  if
$\mathcal{S}$  is   centrally  symmetric,  and   the  second
inequality         is         tight        (that         is,
$\operatorname{asymm}(\mathcal{S})  = \operatorname{aff.dim}
( \mathcal{S} )$) if and only if $\mathcal{S}$ is a simplex.

Notice  the  formal  analogy   between  the  definitions  of
signaling dimension and  asymmetry in Eqs.~\eqref{eq:sigdim}
and~\eqref{eq:asymm}.     Both   quantities    represent   a
``shrinking  parameter''  for  a set,  each  quantity  being
defined  as  the  value  of  such a  parameter  at  which  a
particular  set   inclusion  relation  is   satisfied.   The
difference   between  the   two  quantities   lies  in   the
definitions  of such  sets:  in the  case  of the  signaling
dimension, the set is the polytope $\mathcal{P}_d^{m \to n}$
of  input/output  correlations  attainable  by  a  classical
system of dimension $d$; in the case of the asymmetry, it is
the set of admissible states itself.

Notice  also  the  formal  analogy  between  the  bounds  in
Eqs.~\eqref{eq:bounds}   and~\eqref{eq:bounds2}.    However,
while necessary and sufficient  conditions for the tightness
of the  bounds in Eq.~\eqref{eq:bounds2} are  known, neither
necessary  nor sufficient  conditions for  the tightness  of
Eq.~\eqref{eq:bounds}  are   known.   The   following  lemma
addresses this gap by providing necessary conditions.

\begin{lmm}
  \label{lmm:bounds}
  For  any  system  with  set  $\mathcal{S}$  of  admissible
  states, the signaling dimension satisfies
  \begin{align*}
    \operatorname{sig.dim} \left( \mathcal{S}  \right) & \le
    \operatorname{aff.dim}   \left(    \mathcal{S}   \right)
    \textrm{ if $\mathcal{S}$ is centrally symmetric},
  \end{align*}
  and
  \begin{align*}
    \operatorname{sig.dim} \left( \mathcal{S}  \right) & \ge
    3 \textrm{ if $\mathcal{S}$ is not centrally symmetric}.
  \end{align*}
  or     equivalently     the    first     inequality     in
  Eq.~\eqref{eq:bounds}  is strict  if $\mathcal{S}$  is not
  centrally symmetric and the second inequality is strict if
  $\mathcal{S}$ is centrally symmetric.
\end{lmm}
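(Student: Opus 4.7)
The plan is to prove the two inequalities separately, each building on one of the previously known bounds recalled in the introduction.

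For the upper bound under central symmetry, I would combine the bound $\operatorname{sig.dim}(\mathcal{S}) \le \operatorname{lin.dim}(\mathcal{S})$ from Eq.~\eqref{eq:bounds} (proved in Ref.~\cite{DBTBV17} and refined in Ref.~\cite{Fre22}) with the observation that a centrally symmetric body admits an equivalent representation in one fewer ambient dimension. Concretely, translating $\mathcal{S}$ so that its center of symmetry sits at the origin yields $\mathcal{S}' := \mathcal{S}-c$ with $-\mathcal{S}'=\mathcal{S}'$; since $0$ lies in the relative interior of $\mathcal{S}'$, the linear span of $\mathcal{S}'$ coincides with its affine hull and hence has dimension $\operatorname{aff.dim}(\mathcal{S})$. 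Because the signaling dimension is intrinsic to the input/output correlations and thus invariant under affine reparametrizations of the states (provided the effects are transformed dually, absorbing the constant shift into the unit effect), one may work within this reduced representation. Applying the bound of Eq.~\eqref{eq:bounds} there yields $\operatorname{sig.dim}(\mathcal{S})\le\operatorname{aff.dim}(\mathcal{S})$.

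For the lower bound under non-central symmetry, my plan is to invoke the Matsumoto--Kimura bound~\cite{MK18}, which relates the signaling dimension to the Minkowski measure of asymmetry in the form $\operatorname{sig.dim}(\mathcal{S})\ge\operatorname{asymm}(\mathcal{S})+1$. By Eq.~\eqref{eq:bounds2} and the characterization of tightness recalled immediately after it, non-central symmetry of $\mathcal{S}$ is equivalent to $\operatorname{asymm}(\mathcal{S})>1$ strictly. Substituting into the Matsumoto--Kimura bound gives $\operatorname{sig.dim}(\mathcal{S})>2$, and integrality of the signaling dimension then forces $\operatorname{sig.dim}(\mathcal{S})\ge 3$.

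The main obstacle I anticipate lies in the invariance claim used in the upper bound: one has to verify rigorously that the translation $\omega\mapsto\omega-c$ of the states, together with the dual affine reparametrization of the effects, preserves the correlation polytope $\mathcal{P}_\mathcal{S}^{m\to n}$ of Eq.~\eqref{eq:sigdim}, so that the linear-dimension bound may be invoked in the reduced representation. Once this invariance is spelled out (it amounts to matching, for each pair $(\omega_i,e_j)$, the product $\omega_i\cdot e_j$ with $(\omega_i-c)\cdot e_j+c\cdot e_j$ and absorbing the constant term), the argument is essentially bookkeeping. The lower bound, by contrast, is almost immediate once the Matsumoto--Kimura bound and integrality are in hand, and requires no further substantial step.
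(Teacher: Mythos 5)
Your lower-bound argument is correct and is essentially the paper's own: non-central symmetry gives $\operatorname{asymm}(\mathcal{S})>1$ by the tightness characterization of Eq.~\eqref{eq:bounds2}, the Matsumoto--Kimura identity $\operatorname{inf.stor}(\mathcal{S})=\operatorname{asymm}(\mathcal{S})+1$ together with $\operatorname{inf.stor}(\mathcal{S})\le\operatorname{sig.dim}(\mathcal{S})$ (Thm.~1 of Ref.~\cite{MK18}, or Thm.~1.2, point (1), of Ref.~\cite{Fre22}) gives $\operatorname{sig.dim}(\mathcal{S})>2$, and integrality yields $\ge 3$.

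The upper bound, however, has a genuine gap. Translating by the center $c$ does make the linear span of $\mathcal{S}-c$ coincide with its affine hull, but it does not produce a GPT representation of the same system with linear dimension $\operatorname{aff.dim}(\mathcal{S})$. In the paper's convention $\operatorname{lin.dim}(\mathcal{S})=\operatorname{aff.dim}(\mathcal{S})+1$ precisely because the states lie in the normalization hyperplane $u\cdot\omega=1$, off the origin of their linear span; after your translation the origin lies in the relative interior of $\mathcal{S}-c$, so no unit effect exists on $\spn(\mathcal{S}-c)$, and the correlations $\omega_i\cdot e_j=(\omega_i-c)\cdot e_j+c\cdot e_j$ are \emph{affine}, not linear, functions of the translated states. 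The constant term $c\cdot e_j$ cannot be absorbed without re-introducing the extra dimension, so Eq.~\eqref{eq:bounds} applied to any legitimate reparametrization still only yields $\operatorname{sig.dim}(\mathcal{S})\le\operatorname{aff.dim}(\mathcal{S})+1$. A quick sanity check exposes the problem: your argument never actually uses $-\mathcal{S}'=\mathcal{S}'$, only that $0$ lies in the relative interior of $\mathcal{S}'$, which can be arranged for \emph{any} body by translating by an arbitrary relative-interior point; if the argument were valid it would give $\operatorname{sig.dim}(\mathcal{S})\le\operatorname{aff.dim}(\mathcal{S})$ for every system, contradicting the classical trit (a triangle has $\operatorname{aff.dim}=2$ but $\operatorname{sig.dim}=3$) and indeed contradicting your own lower bound for any non-centrally-symmetric two-dimensional body. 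The paper's proof of the upper bound uses central symmetry essentially and through a different mechanism: central symmetry gives $\operatorname{asymm}(\mathcal{S})=1$, hence $\operatorname{inf.stor}(\mathcal{S})=2$ by Thm.~1 of Ref.~\cite{MK18}, and then Thm.~1.2, point (2), of Ref.~\cite{Fre22} (if $\operatorname{inf.stor}(\mathcal{S})\le\operatorname{aff.dim}(\mathcal{S})$ then $\operatorname{sig.dim}(\mathcal{S})\le\operatorname{aff.dim}(\mathcal{S})$) closes the argument; some such refined result, not a reparametrization of Eq.~\eqref{eq:bounds}, is what the upper bound requires.
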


\begin{proof}
  The   statement  simply   follows   by  putting   together
  Eq.~\eqref{eq:bounds2} with results from Refs.~\cite{MK18}
  and~\cite{Fre22}.
  
  Let  us  first consider  the  case  when $\mathcal{S}$  is
  centrally symmetric.  In  this case $\operatorname{asym} (
  \mathcal{S} ) = 1$  due to Eq.~\eqref{eq:bounds2}.  Due to
  Thm.~1 of Ref~\cite{MK18} one has $\operatorname{inf.stor}
  ( \mathcal{S})  = \operatorname{asymm} (\mathcal{S})  + 1$
  (we omit  here the  definition of the  information storage
  $\operatorname{inf.stor}(\mathcal{S})$;    the    previous
  equation can be taken as  its definition as a function of
  the   asymmetry   $\operatorname{asymm}   (\mathcal{S})$).
  Hence,    for    centrally    symmetric    S    one    has
  $\operatorname{inf.stor}  (\mathcal{S})  =   2$.   Due  to
  Thm.~1.2,    point   (2),    of   Ref.~\cite{Fre22},    if
  $\operatorname{inf.stor}     (    \mathcal{S}     )    \le
  \operatorname{aff.dim}     (    \mathcal{S}     )$    then
  $\operatorname{sig.dim}     (     \mathcal{S}    )     \le
  \operatorname{aff.dim}  (  \mathcal{S}   )$.   Hence,  for
  centrally     symmetric      $\mathcal{S}$     one     has
  $\operatorname{sig.dim}     (     \mathcal{S}    )     \le
  \operatorname{aff.dim} ( \mathcal{S} )$.  Hence, the first
  part of the statement follows.

  Let us  then consider the  case when $\mathcal{S}$  is not
  centrally  symmetric.   Then,  $\operatorname{inf.stor}  (
  \mathcal{S}  )  >  2$ due  to  Eq.~\eqref{eq:bounds2}  and
  Thm.~1 of Ref.~\cite{MK18}.  Then, $\operatorname{sig.dim}
  (  \mathcal{S} )  >  2$ due  to $\operatorname{sig.dim}  (
  \mathcal{S}    )$     being    an    upper     bound    to
  $\operatorname{inf.stor}    (    \mathcal{S}    )$    (see
  Ref.~\cite{MK18}    or    Thm.~1.2,    point    (1),    of
  Ref.~\cite{Fre22}).   Hence,  the   second  part   of  the
  statement follows.
\end{proof}

Notice  that, while  central symmetry  is necessary  for the
signaling   dimension   to    saturate   its   lower   bound
$\operatorname{sig.dim}(\mathcal{S}) = 2$,  such a condition
is not sufficient: for instance, Frenkel showed~\cite{Fre22}
that  the  signaling  dimension   of  the  octahedral  (thus
centrally  symmetric) system  is three.   Analogously, while
central asymmetry  is necessary for the  signaling dimension
to         saturate          its         upper         bound
$\operatorname{sig.dim}(\mathcal{S})                       =
\operatorname{lin.dim}(\mathcal{S})$,  such  a condition  is
not    sufficient:    for     instance,    it    has    been
shown~\cite{DBTBV17}  that the  signaling  dimension of  the
composition of  two square  systems (which is  not centrally
symmetric) is five, while its linear dimension is nine.

Despite  these   considerations,  Lemma~\ref{lmm:bounds}  is
strong enough  to pin  down the  signaling dimension  of any
two-dimensional  system (systems  whose affine  dimension is
two, as is the case e.g. for polygonal theories and the real
qubit) as a  function of its geometry only, as  shown by the
following corollary.

\begin{cor}
  \label{cor:sigdim2d}
  Any system $\mathcal{S}$ such that $\operatorname{aff.dim}
  ( \mathcal{S} ) = 2$ has signaling dimension given by
  \begin{align}
    \label{eq:sigdim2d}
    \operatorname{sig.dim} \left( \mathcal{S} \right) =
    \begin{cases}
      2 & \textrm{if  $\mathcal{S}$ is centrally symmetric},
      \\3 & \textrm{otherwise}.
    \end{cases}
  \end{align}
\end{cor}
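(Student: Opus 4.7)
The plan is to derive the corollary by combining Lemma~\ref{lmm:bounds} with the general bounds in Eq.~\eqref{eq:bounds}, splitting into the two cases based on central symmetry. The key observation is that for an affine dimension of two, the linear dimension is exactly three, so the two general bounds $2 \le \operatorname{sig.dim}(\mathcal{S}) \le \operatorname{lin.dim}(\mathcal{S}) = 3$ leave only the values $2$ or $3$ as possibilities. The corollary then reduces to showing that the symmetry dichotomy exactly matches this binary choice.

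First, I would handle the centrally symmetric case. Here Lemma~\ref{lmm:bounds} gives $\operatorname{sig.dim}(\mathcal{S}) \le \operatorname{aff.dim}(\mathcal{S}) = 2$, while the general lower bound from Eq.~\eqref{eq:bounds} gives $\operatorname{sig.dim}(\mathcal{S}) \ge 2$ (since $\mathcal{S}$, being two-dimensional, is not a single point). Sandwiching yields $\operatorname{sig.dim}(\mathcal{S}) = 2$.

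Second, I would handle the non-centrally-symmetric case. Here Lemma~\ref{lmm:bounds} directly provides the lower bound $\operatorname{sig.dim}(\mathcal{S}) \ge 3$. Meanwhile, since $\operatorname{aff.dim}(\mathcal{S}) = 2$ implies $\operatorname{lin.dim}(\mathcal{S}) = 3$, the general upper bound in Eq.~\eqref{eq:bounds} gives $\operatorname{sig.dim}(\mathcal{S}) \le 3$. Combining yields $\operatorname{sig.dim}(\mathcal{S}) = 3$.

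I do not anticipate any real obstacle: the corollary is essentially a bookkeeping step, exploiting the fact that in affine dimension two the generic window of possible signaling dimensions shrinks to just $\{2,3\}$, so that Lemma~\ref{lmm:bounds}'s central-symmetry dichotomy becomes a complete characterization. The only minor care needed is to verify that $\mathcal{S}$ is not the degenerate one-point set (excluded by $\operatorname{aff.dim}(\mathcal{S}) = 2$), so that Eq.~\eqref{eq:bounds} applies.
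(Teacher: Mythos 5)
Your proposal is correct and follows exactly the paper's route: the paper's proof is precisely the direct combination of Lemma~\ref{lmm:bounds} with the general bounds of Eq.~\eqref{eq:bounds}, and your case split (centrally symmetric: $2 \le \operatorname{sig.dim} \le \operatorname{aff.dim} = 2$; otherwise: $3 \le \operatorname{sig.dim} \le \operatorname{lin.dim} = 3$) just spells out the sandwich the paper leaves implicit.
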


\begin{proof}
  The statement immediately follows  by a direct application
  of Lemma~\ref{lmm:bounds} and of Eq.~\eqref{eq:bounds}.
\end{proof}

For  instance,  for  regular   polygonal  systems  with  $m$
vertices (states), the signaling dimension  is two if $m$ is
even and three otherwise.

\section{An exact symmetries-finding algorithm}
\label{sect:symmetries}

Motivated  by  the role  played  by  the symmetries  in  the
computation  of the  signaling dimension  of two-dimensional
systems, as shown  in the previous section,  in this section
we  address the  problem of  finding the  symmetries of  any
given point  set. How  this problem  fits within  the bigger
problem  of  computing  the   signaling  dimension  will  be
illustrated in the next section.  A trivial variation of our
approach also provides  a way to test the  congruence of two
given  point   sets;  that  is,  whether   there  exists  an
orthogonal transformation mapping the one into the other.

Previous   literature~\cite{Hig85,    WWV85,   BK02,   KR16}
approached the  symmetry-finding problem from  the geometric
viewpoint,    that   is,    by   looking    for   orthogonal
transformations   (linear   transformations   whose   matrix
representation   is   orthogonal,   that  is,   angle-   and
length-preserving)  that act  as permutations  of the  point
set.  That  is to say,  such algorithms depend on  the field
structure of their  input.  To this aim, they  assume a real
computational model, that is, an unphysical machine that can
exactly  store  any  real  number and  can  exactly  perform
arithmetic, trigonometric, and other functions over reals in
finite time.

In Ref.~\cite{DBK23},  in the  unrelated context  of quantum
guesswork,  the   symmetries-finding  problem   was  instead
approached  from the  combinatorial viewpoint,  that is,  by
looking for permutations  of the labels of the  set that act
as orthogonal transformations.  By using established results
on  Gram  matrices,  it  was possible  to  avoid  explicitly
dealing  with orthogonal  transformations altogether.   That
way, a symmetries-finding algorithm  was presented that only
depends on the weaker ring structure (that is, the operation
of  division  is  not  assumed).   That  approach  therefore
corresponds to  an integer  computational model  that solely
assumes the ability to store  integer numbers and to perform
additions and  multiplication in finite time  (any potential
buffer overflow can  be detected and the  computation can be
restarted  by allocating  additional memory),  thus allowing
for closed-form analytical solutions on physical machines.

The number  of permutations grows factorially  in the number
of points, hence the complexity  of any algorithm based on a
naive   exhaustive  search   is   factorial.   However,   by
exploiting a  well-known rigidity property of  simplices, in
Ref.~\cite{DBK23}  it was  also shown  that without  loss of
generality  it   suffices  to   search  over  a   subset  of
permutations  whose  size  only grows  polynomially  in  the
number  of   points,  although  still  factorially   in  the
dimension of the space.

To amend this factorial scaling in the dimension, we propose
here a  branch and bound  algorithm that pushes  forward the
ideas  of Ref.~\cite{DBK23}  by first  reframing the  search
over such  permutations as  the exploration  of a  tree, and
then ``pruning''  those branches that can  provably be shown
not  to  lead  to  any symmetry.   Our  algorithm  therefore
provides   a   heuristic  speedup   over   Ref.~\cite{DBK23}
(however, we recall  that the final result  is guaranteed to
be exact) and, generally, a  better scaling in the dimension
of  the space,  although  the worst  case  scaling is  still
factorial.

A   symmetry   can   be   represented   as   an   orthogonal
transformation $O$ (that is, a length- and angles-preserving
transformation) that acts as a permutations on the labels of
the extremal vectors in set $\mathcal{S}$. In formula
\begin{align}
  \label{eq:sym}
  O \omega_i = \omega_{\sigma \left( i \right)},
\end{align}
for any $i$, where $\sigma$ denotes a permutation of indexes
$i$'s.

Equation~\eqref{eq:sym}   shows  that   symmetries  can   be
represented  as  permutations  of  the labels  that  act  as
orthogonal   matrices;   the  one-to-one   mapping   between
orthogonal   transformations  and   label  permutations   is
guaranteed by  $\mathcal{S}$ being  a spanning set.   If one
also recalls that two labeling  are related by an orthogonal
matrix if  and only if  the corresponding Gram  matrices are
identical, one  immediately has an equivalent  condition for
permutation $\sigma$ to be a symmetry, that is
\begin{align}
  \label{eq:sym2}
  \omega_i \cdot \omega_j = \omega_{\sigma \left( i \right)}
  \cdot \omega_{\sigma \left( j \right)}
\end{align}
for  any  $i$  and  $j$.    Notice  that  the  condition  in
Eq.~\eqref{eq:sym2} is  purely combinatorial, as  opposed to
the geometric condition in Eq.~\eqref{eq:sym}.

Such  was   the  combinatorial  approach  adopted   to  find
symmetries in  Ref.~\cite{DBK23}.  The algorithm  therein i)
iterated        over        all       permutations        of
$\operatorname{aff.dim}(\mathcal{S})$ vectors through Heap's
or     Johnson-Trotter's      algorithms;     ii)     tested
Eq.~\ref{eq:sym2}  for  such  a permutation;  finally,  iii)
extended the test to all the remaining vectors in polynomial
time.  Overall, the algorithm's complexity was polynomial in
$m$   (the    number   of   vectors)   and    factorial   in
$\operatorname{aff.dim}(\mathcal{S})$.

Here,  we  discuss  a  different,  branch-and-bound  way  of
generating  all  permutations.  The generation  of  all  the
permutations  $\sigma$  can   be  achieved,  with  factorial
complexity, by recursively visiting the nodes of a tree; for
instance, when $m = 3$ the tree could be as follows:
\begin{align*}
  \Tree[
    .{$\cdot$}
    [.{$v_1$}
      [.{$v_1 v_2   $}
        [.{$v_1   v_2 v_3  $} ]
      ]
      [.{$v_1 v_3   $}
        [.{$v_1   v_3 v_2  $} ]
      ]
    ]
    [.{$v_2$}
      [.{$v_2 v_1 $} ]
      [.{$v_2 v_3 $} ]
    ]
    [.{$v_3$} ]
  ]
\end{align*}

At the first step, the algorithm sets the first point in the
permutation, it  compares the  (so far,  $1 \times  1$) Gram
matrix  with  $G(v_1)$;  if  they  coincide,  the  algorithm
recursively calls  itself, chooses  the second point  in the
permutation, and compares the (now $2 \times 2$) Gram matrix
with $G  (v_1, v_2)$. The  algorithm proceeds this way  in a
depth-first exploration of the tree. Once it gets to a leaf,
the  same procedure  as  in  Ref.~\cite{DTB24} is  followed.
However,  with respect  to Ref.~\cite{DTB24},  the algorithm
does  \textit{not} necessarily  reach each  leaf, since  the
failure of the aforementioned comparison of Gram matrices on
the  ancestor  of a  leaf  proves  that  the leaf  does  not
represent  a symmetry  without the  need to  reach it.   For
instace,  in the  tree above,  such a  comparison failed  at
nodes $(v_2 v_1)$, $(v_2  v_3)$, and $(v_3)$. This procedure
is described in Algorithm~\ref{algo:symfact}.

\begin{algorithm}[h!]
  \caption{Symmetries finding (branch and bound)}
  \label{algo:symfact}
  \begin{algorithmic}
  \Require{$(v_1, \dots v_m)$ with $G (v_1, \dots v_d)$ invertible and $(v_{d+1}, \dots, v_m) = \operatorname{order} (v_1, \dots v_d)$}
    \Function{Node}{$\mathbf{v}$}
    \If{$G(v_1, \dots v_{|\mathbf{v}|}) \neq G(\mathbf{v})$}
    \Return
    \EndIf
    \If{$| \mathbf{v} | < d$}
    \For{$v \not\in \mathbf{v}$}
    \State \Call{Node}{$\operatorname{concat} ( \mathbf{v}, (v) )$}
    \EndFor
    \Else
    \State $\mathbf{u} \leftarrow \operatorname{order} ( \mathbf{v} )$
    \State $\mathbf{t} \leftarrow \operatorname{concat} ( \mathbf{v}, \mathbf{u} )$
    \If{$G ( v_1, \dots v_m) \neq G( \mathbf{t} )$} \Return
    \EndIf
    \State $\mathcal{S} \leftarrow \mathcal{S} \cup \{ \mathbf{t} \}$
    \EndIf
    \EndFunction
    
    \Call{Node}{$\cdot$}
  \end{algorithmic}
\end{algorithm}

\section{Polytopic systems}
\label{sect:algo}

In this section, we address the problem of computing exactly
the  signaling dimension  of  any given  system whose  state
space is a polytope. We  stress that our algorithm is exact,
that is, if the system  in input is represented exactly (for
instance, if its set $\mathcal{S}$ of admissible states is a
rational   polytope),   then   the  algorithm   outputs   in
\textit{finite}  time   the  \textit{exact}  value   of  its
signaling dimension.

We  assume  the  sets  $\mathcal{S}$  and  $\mathcal{E}$  of
admissible states and effects are given. If only one of them
is given, the other can  be found, by applying the so-called
no-restriction  hypothesis, using  the  techniques based  on
linear programming described  in Ref.~\cite{DTB24}.  We also
assume  that the  extremal measurements  are known.  If not,
they can be found using the techniques, also based on linear
programming, described in the same reference.

Once the symmetries  (if any) of the system  have been found,
the  algorithm  selects  without   loss  of  generality  one
representative measurement for  each equivalence class under
such  symmetries. For  each representative  measurement, the
algorithm tests  the signaling  dimension within  the bounds
provided  in Lemma~\ref{lmm:bounds}.  This,  along with  the
more effective way  to compute the symmetries,  allows for a
speedup with  respect to  the previously known  algorithm of
Ref.~\cite{DTB24}.

The  test itself  proceeds as  in Ref.~\cite{DTB24},  and we
summarize it here for  completeness.  First, the correlation
matrix between the states of the systems and the measurement
elements  is  computed,  producing   a  matrix  $p_{i,j}  :=
\omega_i \cdot e_j$.  Without loss of generality, the convex
hull $\conv ( \{ p_{i, \cdot} \}  )_i$ of the rows of such a
matrix is  considered; reducing the size  of the correlation
matrix  speeds up  later  steps of  the  algorithm. All  the
classical $m$-input/$n$-output correlation  matrices $\{ A_k
\}_k$ attainable by a  $d$-dimensional classical systems and
whose entries  are null wherever the  corresponding entry of
the correlation  matrix are null are  generated. As observed
in Ref.~\cite{DTB24},  this allows to compute  the signaling
dimension without the need for a simultaneous enumeration of
all the vertices  of $\mathcal{P}_d^{m \to n}$,  which is in
contrast,    for   instance,    with    the   proposal    of
Ref.~\cite{DC21}. To conclude the test, system $\mathcal{S}$
can be  simulated by  a classical $d$-dimensional  system if
and only if the correlation  matrix can be convex decomposed
in terms of the $A_k$'s, a fact that can be verified through
linear   programming.    The   algorithm  is   depicted   in
Fig.~\ref{fig:flowchart},  and  an   implementation  in  the
Python programming language is  provided as free software in
Ref.~\cite{KD24}.

\begin{figure}[h!]
  \begin{center}
    \begin{tikzpicture}
      \tikzset{terminal/.style={rounded rectangle, draw, text centered, text width=4cm, minimum height=3mm}};
      \tikzset{block/.style={rectangle, draw, text centered, text width=4cm, minimum height=3mm}};
      \tikzset{decision/.style={diamond, draw, text centered, text width=4cm, minimum height=3mm, aspect=3}};
      \tikzset{decision_big/.style={diamond, draw, text centered, text width=5cm, minimum height=3mm, aspect=3}};
      \node[terminal](start)at (0, 0){start};
      \node[block, below=0.5 of start](state){States $\mathcal{S}$};
      \node[block, below=0.5 of state](effect){Effects $\mathcal{E}$};
      \node[decision_big, below=0.5 of effect](extmeas){Another representative of ext.meas under symmetries?};
      \node[block, below=0.5 of extmeas](P){$p_{i,j} = \omega_i \cdot e_j$};
      \node[block, below=0.5 of P](dbound){$d \gets \begin{cases}
        2 & \textrm{if CS}\\
        3 & \textrm{otherwise}
      \end{cases}$};
      \node[decision_big, below=0.5 of dbound](check){Can $p$ be simulated by a classical system of dimension $d$?};
      \node[block, below=0.5 of check](dpp){$d \gets d + 1$};
      \node[decision_big, below=0.5 of dpp](dcheck){$d < \begin{cases}
        \operatorname{aff.dim} & \textrm{if CS} \\
        \operatorname{lin.dim} & \textrm{otherwise}
      \end{cases}$};
      \node[block, below=0.5 of dcheck](sigdim){$\operatorname{sig.dim} ( \mathcal{S} ) = \max(d)$};
      \node[terminal, below=0.5 of sigdim](end){end};
      \draw[arrows={-Latex[length=2mm]}](start)--(state);
      \draw[arrows={-Latex[length=2mm]}](state.south)node[right, yshift=-2.5mm]{rational double description method}--(effect);
      \draw[arrows={-Latex[length=2mm]}](effect.south)node[right, yshift=-2.5mm]{rational double description method}--(extmeas);
      \draw[arrows={-Latex[length=2mm]}](extmeas.south)node[right, yshift=-2.5mm]{Yes}--(P);
      \draw[arrows={-Latex[length=2mm]}](P)--(dbound);
      \draw[arrows={-Latex[length=2mm]}](dbound)--(check);
      \draw[arrows={-Latex[length=2mm]}](check.south) node[right, yshift=-2.5mm]{No} -- (dpp);
      \draw[arrows={-Latex[length=2mm]}](dpp.south)--(dcheck.north);
      \draw (dcheck.west) ++ (-0.5, 0) node[below, xshift=0.25cm]{Yes} -- (dcheck.west);
      \draw (dcheck.east) ++ (0.5, 0) node[below, xshift=-0.25cm]{No} -- (dcheck.east);
      \draw[arrows={-Latex[length=2mm]}](dcheck.east) ++ (0.5, 0) |- (extmeas.east);
      \draw[arrows={-Latex[length=2mm]}](dcheck.west) ++ (-0.5, 0) |- (check.west);
      \draw (check.east) ++ (0.99, 0) node[below, xshift=-0.5cm]{Yes} -- (check.east);
      \draw[arrows={-Latex[length=2mm]}](extmeas.west) ++ (-1.5, 0) |-(sigdim);
      \draw (extmeas.west) ++ (-1.5, 0) node[below, xshift=1cm]{No} -- (extmeas.west);
      \draw[arrows={-Latex[length=2mm]}](sigdim)--(end);
    \end{tikzpicture}
  \end{center}
  \fcaption{Flowchart  representation of  the algorithm  for
    the  exact   computation  of  the   signaling  dimension
    $\operatorname{sig.dim}(\mathcal{S})$  of any  given set
    $\mathcal{S}$     of     admissible     states     using
    Lemma~\ref{lmm:bounds}.}
  \label{fig:flowchart}
\end{figure}

The  problem  complexity  is  dominated  by  the  number  of
vertices of the polytope $\mathcal{P}_d^{m \to n}$, that is,
the  number   of  extremal   classical  $m$-input/$n$-output
correlation  matrices   $\{  A_k   \}_k$  attainable   by  a
$d$-dimensional  classical  systems.    Upon  denoting  with
$\binom{n}{k}$ the binomial coefficient  and with ${m \brace
  k}$  the   Stirling  number   of  the  second   kind,  one
has~\cite{DBTBV17}  that  the  number  $V$  of  vertices  of
$\mathcal{P}^{m \to n}_d$ is given by
\begin{align*}
  V = \sum_{k=1}^{d} k! \binom{n}{k} {m\brace k}.
\end{align*}
Such a  quantity grows  exponentially in  the number  $m$ of
states and factorially  in the number $n$  of effects; hence
the problem quickly becomes practically unfeasible as either
quantity  increases. 

\section{Applications}
\label{sect:applications}

The  algorithm has  been  applied to  explore the  signaling
dimension  landscape  of  GPTs whose  set  $\mathcal{S}$  of
admissible  states  is  a  rational  polytope,  that  is,  a
polytope  for  which  there  exists a  basis  in  which  the
coordinates of  each vertex are rational  numbers (integers,
up  to  a rescaling).   This  is  not  a limitation  of  the
algorithm  per se,  rather  a limitation  of the  underlying
linear    programming    and    double-description    method
subroutines,  and has  been  introduced in  order to  obtain
exact results.

Specifically, the sets that have been considered include the
rational  Platonic   solids  (octahedron,   whose  signaling
dimension had already been computed by Frenkel~\cite{Fre22},
and cube,  whose signaling dimension is  trivially two), the
rational   Archimedean    solids   (truncated   tetrahedron,
cuboctahedron, and  truncated octahedron), and  the rational
Catalan solids  (triakis tetrahedron,  rhombic dodecahedron,
and tetrakis hexahedron), all of which have affine dimension
equal to three (see Table~\ref{tab:regular}).

\begin{table}[h!]
  \begin{center}
    \begin{tabular}{|c|c|c|c|c|c|c|c|}
      \hline
      \textbf{$\mathcal{S}$} & $m$ & $\operatorname{aff.dim} ( \mathcal{S} )$ & CS & $|\mathcal{G}|$ & $|\mathcal{M}|$ & $|\mathcal{M}'|$ & $\operatorname{sig.dim} ( \mathcal{S} )$\\
      \hline
      Octahedron & 6 & 3 & True & 48 & 6 & 2 & 3\\
      Cube & 8 & 3 & True & 48 & 3 & 1 & 2\\
      Truncated tetrahedron & 12 & 3 & False & 24 & 6 & 3 & 3\\
      Triakis tetrahedron & 8 & 3 & False & 24 & 93 & 6 & 3\\
      Cuboctahedron & 12 & 3 & True & 48 & 41 & 6 & 3\\
      Rhombic dodecahedron & 14 & 3 & True & 48 & 20 & 3 & 2\\
      Truncated octahedron & 24 & 3 & True & 48 & 41 & 6 & 2\\
      Tetrakis hexahedron & 14 & 3 & True & 48 & 828 & 26 & 3\\
      \hline
    \end{tabular}
  \end{center}
  \tcaption{Exact value of the  signaling dimension for GPTs
    as a function of the  state space $\mathcal{S}$, for all
    rational Platonic, Archimedean, and Catalan solids.  The
    columns represent,  from left to right:  the geometrical
    characterization of  the state space  $\mathcal{S}$; the
    number  $m$ of  extremal  states;  the affine  dimension
    $\operatorname{aff.dim}   (   \mathcal{S}  )$;   whether
    $\mathcal{S}$ is  centrally symmetric  (CS) or  not (see
    Lemma~\ref{lmm:bounds}); the order  $| \mathcal{G} |$ of
    the symmetry  group $\mathcal{G}$ of  $\mathcal{S}$ and,
    under  the no-restriction  hypothesis, of  $\mathcal{E}$
    too;    the   number    $|\mathcal{M}|$   of    extremal
    measurements; the number $|\mathcal{M}'|$ of equivalence
    classes  of  extremal  measurements  up  to  symmetries;
    finally, the  exact value of the  signaling dimension of
    $\mathcal{S}$.}
  \label{tab:regular}
\end{table}

One question  that Table~\ref{tab:regular}  helps addressing
is: what  systems are  indistinguishable from  the classical
bit and  the qubit  in terms  of their  signaling dimension?
Notice       that,      as       a      consequence       of
Corollary~\ref{cor:sigdim2d},   the   only   two-dimensional
systems systems indistinguishable from the classical bit and
the qubit  in terms of  their signaling dimension  are those
whose     state     space    is     centrally     symmetric.
Table~\ref{tab:regular}   shows   that,   among   the
three-dimensional   systems   considered,  the   only   ones
indistinguishable from a classical bit  and a qubit in terms
of   signaling  dimension   are   the   cube,  the   rhombic
dodecahedron, and the truncated octahedron.

Other   sets  that   have   been   considered  include   the
hyper-octahedra    in   dimensions    up   to    five   (see
Table~\ref{tab:octa}).  Incidentally,  it is  perhaps  worth
noticing  that,  as  a  consequence of  the  fact  that  any
hyper-octahedral  set of  effects  admits only  two-outcomes
extremal measurements, the signaling dimension of any theory
with hyper-cubical  set $\mathcal{S}$ of states  is bound to
be two.

\begin{table}[h!]
  \begin{center}
    \begin{tabular}{|c|c|c|c|c|c|c|c|}
      \hline
      \textbf{$\mathcal{S}$} & $m$ & $\operatorname{aff.dim} ( \mathcal{S} )$ & CS & $|\mathcal{G}|$ & $|\mathcal{M}|$ & $|\mathcal{M}'|$ & $\operatorname{sig.dim} ( \mathcal{S} )$\\
      \hline
      Octahedron & 6          & 3       & True & 48           & 6      & 2     & 3      \\
      Hyper-octahedron & 8          & 4       & True & 384          & 48     & 3     & 3      \\
      Hyper-octahedron & 10         & 5       & True & 3840         & 2712   & 9     & 3 \\   
      \hline
    \end{tabular}
  \end{center}
  \tcaption{Exact value of the  signaling dimension for GPTs
    as  a function  of  the state  space $\mathcal{S}$,  for
    hyper-octahedra  up  to  dimension  five.   The  columns
    represent,   from  left   to   right:  the   geometrical
    characterization of  the state space  $\mathcal{S}$; the
    number  $m$ of  extremal  states;  the affine  dimension
    $\operatorname{aff.dim}   (   \mathcal{S}  )$;   whether
    $\mathcal{S}$ is  centrally symmetric  (CS) or  not (see
    Lemma~\ref{lmm:bounds});  the  order $|\mathcal{G}|$  of
    the symmetry  group $\mathcal{G}$ of  $\mathcal{S}$ and,
    under  the no-restriction  hypothesis, of  $\mathcal{E}$
    too;    the   number    $|\mathcal{M}|$   of    extremal
    measurements;    the    number    $|\mathcal{M}'|$    of
    representatives  of  equivalence   classes  of  extremal
    measurements up to symmetries;  finally, the exact value
    of the signaling dimension of $\mathcal{S}$.}
  \label{tab:octa}
\end{table}

Given that  the signaling  dimension of  the two-dimensional
octahedron   (the  square)   is   two,  and   that  of   the
three-dimensional octahedron  is three,  it might  have been
expected the signaling dimension  of the hyper-octahedron to
grow  with  the  affine   dimension.   This  expectation  is
motivated  by  the fact  that  the  dual  set, that  is  the
hyper-cube,  contains non-trivial  extremal measurements  in
any dimension  (for instance, it is  known~\cite{Mar11} that
the  regular  simplex can  be  inscribed  in the  hyper-cube
whenever Hadamard matrices exists,  e.g. whenever the linear
dimension  is a  multiple of  four up  to $664$).   However,
Table~\ref{tab:octa} disproves such an expectation, at least
up to dimension five.

Let  us  conclude  by  discussing  the  performance  of  our
algorithm  in  comparison   with  previous  proposals.   For
instance,    for    $d    =    3$   the    last    row    of
Table~\ref{tab:regular}  corresponds to  $m =  14$ and  $n =
24$, while the last  row of Table~\ref{tab:octa} corresponds
to $m  = 10$  and $n =  32$; as a  comparison, for  the same
value  of the  dimension $d  =  3$, by  using the  algorithm
proposed  in Ref.~\cite{DC21},  the Authors  thereof reached
maximum values of  $m = 4$ and $n =  12$. Our algorithm also
outperforms (by roughly  a factor of four)  the one proposed
and implemented  in Ref.~\cite{DTB24} in the  calculation of
the  signaling dimension  of the  composition of  two square
systems.

\section{Conclusion and outlook}
\label{sect:conclusion}

In  this work  we explored  the landscape  of the  signaling
dimension of  generalized probabilistic theories.   Our main
result  consists in  deriving necessary  conditions for  the
saturation  of  upper  and  lower bounds  on  the  signaling
dimension.  We discuss two  applications of such conditions.
First, we  show that  such conditions suffice  to completely
characterize  the signaling  dimension of  any system  whose
affine dimension is two, thus  proving that in that case the
signaling dimension  is completely determined by  whether or
not the system  is centrally symmetric. Second,  we show how
such conditions can be used to improve upon previously known
algorithms  for  the  exact  computation  of  the  signaling
dimension of any  given polytopic system, and  we apply such
improved  algorithms  to  compute  the exact  value  of  the
signaling   dimension  of   certain   classes  of   rational
polytopes.

We conclude  by discussing some  open problems. It  would of
course be  of the  utmost interest  to derive  a closed-form
solution to the signaling dimension problem (an optimization
problem) in terms  of the geometrical properties  of the set
of admissible states  in arbitrary dimension, as  we did for
the two-dimensional case.

Another issue consists of the fact that, as explained in the
previous   section,  for   merely   technical  reasons   the
implementation of  our algorithm  is restricted  to rational
polytopes.  This,  for instance,  excludes from  our results
certain   Platonic   solids   (the   icosahedron   and   the
dodecahedron), as  well as  certain Archimedean  and Catalan
solids.  Generalizing  the implementation and  computing the
signaling dimension for such solids would shed further light
on the problem of characterizing those state spaces that are
indistinguishable from  the classical  bit and the  qubit in
terms of their signaling dimension.

Another interesting  open question is whether  the signaling
dimension  of  regular  hyper-octahedra increases  with  the
affine  dimension  of the  system  (and,  if so,  how).   As
justified  in   the  previous  section,  this   question  is
particularly  interesting in  those  (affine) dimensions  in
which there  exist Hadamard  matrices, the smaller  of which
(after   three)  is   seven.   However,   the  computational
complexity of the  problem, along with the fact  that we are
applying  a  general-purpose  algorithm not  specialized  to
hyper-octahedra,  restricted our  analysis  up to  dimension
five; hence, the problem remains open.

\section*{Acknowledgments}

M.~D.  acknowledges support from  the Department of Computer
Science and Engineering, Toyohashi University of Technology,
from the International Research Unit of Quantum Information,
Kyoto  University, and  from the  JSPS KAKENHI  grant number
JP20K03774.

\end{document}